\newtheorem{theorem}{Theorem}
\newtheorem{lemma}{Lemma}
\begin{document}

\title{Robust quantum random number generator based on avalanche photodiodes}

\author{Fang-Xiang Wang,$^{1}$ Chao Wang,$^{1}$ Wei Chen, $^{1}$\footnote{weich@ustc.edu.cn} Shuang Wang, $^{1}$\footnote{wshuang@ustc.edu.cn} Fu-Sheng Lv,$^{2}$ De-Yong He,$^{1}$ Zhen-Qiang Yin,$^{1}$ Hong-Wei Li,$^{1}$ Guang-Can Guo,$^{1}$ and Zheng-Fu Han,$^{1}$}
\address{ $^1$ Key Laboratory of Quantum Information, University of Science and Technology of China, Hefei 230026, China\\
and Synergetic Innovation Center of Quantum Information $\&$ Quantum Physics, University of Science and Technology of China,\\
Hefei, Anhui 230026, China\\
$^2$ Department of Mathematics and LPMC, Nankai University, Tianjin 300071, China \\}


\begin{abstract}

We propose and demonstrate a scheme to realize a high-efficiency truly quantum random number generator (RNG) at room temperature (RT). Using an effective extractor with simple time bin encoding method, the avalanche pulses of avalanche photodiode (APD) are converted into high-quality random numbers (RNs) that are robust to slow varying noise such as fluctuations of pulse intensity and temperature. A light source is compatible but not necessary in this scheme. Therefor the robustness of the system is effective enhanced. The random bits generation rate of this proof-of-principle system is 0.69 Mbps with double APDs and 0.34 Mbps with single APD. The results indicate that a high-speed RNG chip based on the scheme is potentially available with an integrable APD array.

\end{abstract}



\maketitle

\section{Introduction}

{R}{andom} numbers are important in many fields of scientific research and real-life applications, such as fundamental physical research, computer science and the lottery industry. Although pseudo RNs can be generated by computer software and hardware with very high speed, high quality truly random number generators (TRNGs) must be adopted in some important applications. For example, TRNGs play important roles in information security, in which quantum cryptography is an emerging technology with potential applications to the next generation information security infrastructure.

The unpredictability of a physical procedure is the resource for generating truly RNs, and two steps are typically necessary to generate RNs with these procedures. First, signals related to a random physical procedure must be effectively generated and gathered. There have been many TRNGs based on physics, for example, circuit noise \cite{Holman,Petrie,Jun,Bucci} and radioactive decay \cite{Isida}. In all available elements, quantum mechanics is good for generating a nondeterministic signal. Some RNGs are designed with a quantum procedure in nature, such as wave function collapse of  single photon due to measurement \cite{Stefanov2000,Jennewein,Ma2005,Furst,Ren}, entangled state measurement \cite{Pironio}, effects of vacuum fluctuation \cite{Gabriel,Jofre} and quantum phase fluctuation \cite{Xu}. In many of these schemes, an almost single photon light source is necessary for generating quantum signals \cite{Jennewein,Furst,Ren,Dynes}. A quantum random number generator (QRNG), in which the light source is not necessary but compatible may have advantages in integration and usage.

The second step of a physical RNG is to implement an effective encoding method to transform these physical signals into RNs. The efficiency of encoding methods is a key limitation for the generation rate of RNGs. As devices and environments may vary in real time, postprocessing will be necessary to generate high-quality RNs. Even commercial products, like IdQuantique Quantis random number generator, for which Photons - light particles - are sent one by one onto a semi-transparent mirror and detected, cannot avoid postprocessing. Algorithms applied to raw RNs may reduce the efficiency of the final RNs and increase the complexity and cost of the system. Thus, the kernel of a TRNG is an effective, encoding method that is immune to slowly varying noise interference and is no need for complex postprocessing to remove bias. What should be noted here is the boundary between encoding methods and postprocessing algorithms. Although the boundary is not clearly defined, we adopt the principle that postprocessing algorithms take a large amount of resources \cite{Mario}. Many QRNGs can generate high-quality RNs by utilizing simple encoding methods, but efficiency is a dominant limitation for most ones \cite{Ren,Dynes,Wei}, for example, the efficiency in reference \cite{Ren} is 40$\%$. Some other one achieved very high rates by encoding the amplitude of the probe current of the detector into multi-bit RNs \cite{Liu}. However, the amplitude of the probe current is sensitive to devices and environments, so stable devices with high resolution are required when implementing these RNGs, which indicates higher cost and greater complexity.

Bias-free physical processes are perfect for RNG so that the encoding method will be mostly simple and will use minimal resources. However, processes used in RNG are always biased, so that the encoding method plays a key role in the RNG to obtain 
high-quality RNs. John von Neumann first proposed an unbiased encoding method for biased Bernoulli trials \cite{Neumann}. It has been used in QRNG \cite{Wei}. However, the efficiency limitation of the von Neumann method is 0.25. The method was subsequently developed in order to obtain a higher efficiency \cite{Elias,Hoeffding,Samuelson}, among which Elias produced a very high efficiency for infinite situations \cite{Elias}. Ren et al. proposed another encoding method based on the precise discrimination of photon numbers of two consecutive pulses \cite{Ren}.This scheme needs high precision devices to discriminate photon numbers, and the efficiency limitation of the method is 0.5.

In this study, we propose a TRNG scheme based on the discrimination avalanche pulses of APD. These pulses can be generated by the dark current of APD or incident photons, so that a light source is compatible but not necessary in the scheme. A robust encoding method for biased Bernoulli trials with higher efficiency than previous works is proposed. Furthermore, we test the system with multi-APDs, the experiment results indicate the feasibility of implementation of high-quality, robust QRNG chips using an integrated APD array.

\section{RNG scheme}
\label{RNG scheme}

According to the quantum theory of lasers, the photon statistics of a laser pulse operating above threshold follows the Poisson distribution \cite{Scully}, which can be preserved after drastic attenuation. The Poisson distribution is
\begin{equation}
P_{\lambda}(\textit{n})
=
\frac{\lambda^n}{n!}e^{-\lambda},
\end{equation}
where $\lambda$ is the mean photon number of a laser pulse and $P_{\lambda}(\textit{n})$
is the probability that the pulse contains $n$ photons. Thus, the coherent state produces an unpredictable photon number for every detection, and this quantum property can be used to implement QRNG.

If the detection efficiency of the APD is not taken into account, the probability of an avalanche pulse caused by a laser pulse produces a photon number $n > 0$ in the pulse, which can be described as
\begin{equation}
\sum_{n>0} P_{\lambda}(\textit{n})=1-P_{\lambda}(n=0)=1-e^{-\lambda}.
\end{equation}

The avalanche pulses of an APD can be generated by dark currents. Because of thermal fluctuation, electrons of the APD may transit from the top of the valence band to the conduction band. Electrons in the conduction band are sped up by the high reverse-bias electric field and lead to avalanche pulses. Because the thermal fluctuation at RT is much smaller than the energy gap between the valence band and the conduction band, the transiting probability is very small. We use the tight-binding approximation here. Thus, the transiting events of different atoms are independent identically distributed (IID), and the statistics of total events follow the Bernoulli distribution
\begin{equation}
P(n_1=k)={N_1\choose k}\ p_1^k\ (1-p_1)^{N_1-k}
\end{equation}
where $n_1$ is the total electron number transiting to the conduction band during a certain priod $\tau,N_1$ is the total electron number at the top of valence band, $p_1$ is the transiting probability of a single electron, and $P(n_1=k)$ is the transiting probability of k electrons. As $p_1$ is much smaller than 1 and $N_1$ is large in the material, the transiting process follows the Poisson limit theorem
\begin{equation}
\lim_{N_1 \to \infty , p_1 \to 0} {N_1\choose k} p_1^k{(1-p_1)}^{N_1-k}=
\frac{\lambda_1^k}{k!}e^{-\lambda_1}=P_{\lambda_1}(k),
\end{equation}
where $\lambda_1$ is the mean electron number transiting to the conduction band during $\tau$ . Thus, the electrons' transiting process follows the Poisson distribution, and so does the dark count. The dark count is then as usable as a laser pulse. Dark counts of APD were first used by Tawfeeq to propose a RNG scheme \cite{Tawfeeq}. The scheme was easier and provided a new idea regarding RNG based on APD. However, that scheme did not show an effective encoding method and could not generate RNs. And RNG with high randomness based on dark counts of APD has not been implemented before.

Because the sum of Poisson-distributed random variables follows the Poisson distribution, the sum probability of an avalanche during $\tau'$ follows
\begin{equation}
p=P_{\lambda^{'}} (n>0)=1-e^{-\lambda'},
\end{equation}
where $\tau^{'}$ is the detection window, $\lambda^{'}=\eta (\lambda+\lambda_1)$ , and $\eta$ is the detection efficiency of APD. Clearly, the probability of no avalanche pulse is $q=1-p=e^{-\lambda^{'}}.$ The detection process is then a Bernoulli trial. A simple and robust encoding method for biased Bernoulli trials is then proposed here. It is an extension of the von Neumann method but with much higher efficiency. The encoding method is constructed as follows.

We consider the physical system of an avalanche photodiode (APD) working on the Geiger mode. We treat a detection window of APD as a time bin and sequence these time bins with time. According to the discussion above, avalanches caused by laser and thermal fluctuation in different time bins are IID if experimental parameters are constant, namely, $p=p_0$. In fact, experimental parameters are hardly constant. We claim that our encoding method constructed here also applies to the condition that experimental parameters vary slowly so that the encoding method is effective and robust. We mark a "1" in a time bin if an avalanche happens in the corresponding detection window; otherwise we mark a "0" in it. Considering $N$ time bins happened successively as a time-bin block. There are totally ${N\choose k}$ possible combinations when $k$ "1" are marked in the block if we do not get additional information about the block, namely, the uncertainty of these $N$ time bins are $N\choose k$. These equiprobable $N\choose k$ possible combinations are then encoded into uniform RNs from 0 to ${N\choose k}-1$. The encoding processes are one-to-one mapping and the mapping function is
\begin{equation}
f(k_1,k_2,\cdots,k_k)=\sum_{j=1}^{k}{N-k_j\choose k-j+1},
\end{equation}
where, $k_j$ means that the $j$-th "1" happened in the $k_j$-th time bin.

Then, we go to the interpretation of the mapping function. As discussed, if we only know that there are $k$ "1" in the time-bin block, the uncertainty is $N\choose k$. If we get the temporal information in the time bin suquence of the first "1", namely, $k_1$ is known, the uncertainty reduces to ${N-k_1 \choose k-1}$, in other words, the information content we get is ${N\choose k}-{N-k_1 \choose k-1}$. As similar, when $k_2$ is also known, the information content we get increases by ${N-k_1 \choose k-1} - {N-k_2 \choose k-2}$. The uncertainty will reduce further if $k_3, k_4, \cdots$ are also known. In the extreme case, if $k_1, k_2, \cdots, k_k$ are all know, the uncertainty remaining becomes zero, and we get all information content about these $N\choose k$ combinations. We sum all information content got with $k_1$ to $k_k$ and the summation is the RN we want. The mapping function is $f(k_1,k_2,\cdots,k_k)=\sum_{j=1}^{k}{N-k_j+1\choose k-j+1}-{N-k_j\choose k-j}=\sum_{j=1}^{k}{N-k_j\choose k-j+1}$, where we have used the combination formula ${N+1\choose k+1}={N\choose k+1}+{N\choose k}$. It is evident that $N-k_j\choose k-j+1$ is monotonic with $k_j$. Thus, the mapping function is monotonic. The maximum possible number got from mapping function is $f(k_1=1,k_2=2,\cdots,k_k=k)=\sum_{j=1}^{k}{N-j\choose k-j+1}={N\choose k}-1$. And the minimum possible number is $f(k_1=N-k+1,k_2=N-k+2,\cdots,k_k=N)=\sum_{j=1}^{k}{N-(N-k+j)\choose k-j+1}=0$. Thus, the encoding process is one-to-one mapping and the RN is in $N\choose k$ representation.

Taking into account the wide applications of binary RNs, the $N\choose k$-ary encoding method can go further and be modified by the binary method proposed by Elias \cite{Elias}. The method expands $N\choose k$ into subblocks as follows:
\begin{equation}
{N\choose k}=\alpha_m2^m+\alpha_{m-1}2^{m-1}+\cdots+\alpha_02^0.
\end{equation}
So that $\alpha_m,\alpha_{m-1},\cdots,\alpha_0$ are binary expansion coefficients of integer $N\choose k$, where $\alpha_m=1,\alpha_i=0$ or 1 for $0\leq i<m$. The subblock related to the $\alpha_0$ term should be abandoned, as it contains either one or no member and could not be encoded into RNs. Suppose the non-zero binary expansion coefficients are ${\alpha_m,\alpha_{i_1},\alpha_{i_2},\cdots,\alpha_{i_l}}$. If $f(k_1,k_2,\cdots,k_k)< 2^m$, convert $f(k_1,k_2,\cdots,k_k)$ into a m-bit binary number directly. If $2^m+\sum_{s=1}^{r}2^{i_s}\leq f(k_1,k_2,\cdots,k_k)<2^m+\sum_{s=1}^{r+1}2^{i_s}$, then convert $f'(k_1,k_2,\cdots,k_k)=f(k_1,k_2,\cdots,k_k)-(2^m+\sum_{s=1}^{r}2^{i_s})$ into a $i_{r+1}$-bit number directly (the schematic graph of encoding process is shown in Figure \ref{fig:wangf1}). $f(k_1,k_2,\cdots,k_k)$ is abandoned if $i_{r+1}=0$.
\begin{figure}
\centering
	\resizebox{8.5cm}{5cm}{\includegraphics{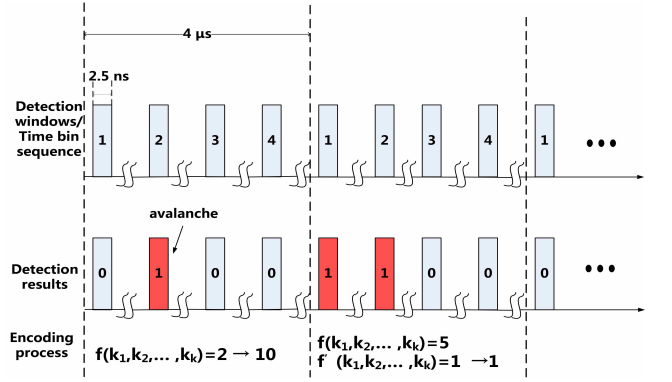}}
	\caption{(Color online) The schematic graph of encoding process in time sequence, where $N=4$. For the first $N$ Time bins (detection windows), $k=1, {N\choose k}=4=2^2$, $f(k_1,k_2,\cdots,k_k)<2^2$ always holds and $f(k_1,k_2,\cdots,k_k)=3$ converts into a 2-bit number "10" directly. For the second $N$ Time bins, $k=2, {N\choose k}=6=2^2+2^1$,$2^2<f(k_1,k_2,\cdots,k_k)=5<2^2+2^1, f'(k_1,k_2,\cdots,k_k)=5-2^2=1$, thus, $f'(k_1,k_2,\cdots,k_k)$ converts into a 1-bit number "1" directly.}
	\label{fig:wangf1}
\end{figure}

The encoding method constructed requires $p$ to be constant among these $N$ time bins (detection windows) of the same block. But the p values in different blocks are not necessarily identical, so that the method is robust to environment noise. As detection interval of APD can be as less as $\sim ns$, it is only $\sim \mu s$ when $N\sim 100$. It is reasonable to consider that parameters, depending on environments, which are slowly varying, are invariable in such a short time. These parameters can be laser intensity, temperature, etc. In additional, the raw RNs generated remain uniform even the slowly varying interferences are periodic (see Section \ref{Experimental setup} and Section \ref{Results and discussion}). Thus, the encoding method is effective, efficiency and robust in practice.

The encoding method is effective for any $k\neq 0,N$. Taking into account all possible value of $k$, the average encoding efficiency per time bin before binary expansion is
\begin{equation}
H(N,p)=-\frac{1}{N} \sum_{k=1}^{N-1}{N\choose k} p^k(1-p)^{N-k}(log_2\frac{1}{{N\choose k}}).
\end{equation}
A higher $H(N,p)$ indicates a higher efficiency. For any integer $N\geq 2$, the optimal $p$ for the average encoding efficiency $H(N,p)$ is $\frac{1}{2}$, and $H(N,p)\to S(p)$ as $N\to \infty$ (see Figure \ref{fig:wangf2}), where $S(p)$ is the Shannon entropy of a single Bernoulli trial. In addition, $H(N,\frac{1}{2})$ increases with $N$ and converges to 1 (the projection in circular blue curve of Figure \ref{fig:wangf2}). For instance, $H(5,\frac{1}{2})=0.5604$, while $H(10,\frac{1}{2})=0.7294$, the efficiency is much higher than previous ones based on single photon discrimination \cite{Ren,Dynes,Wei}. The theorems are proven in the Appendix. 

\begin{figure}
\centering
\includegraphics[width=8cm,height=5cm]{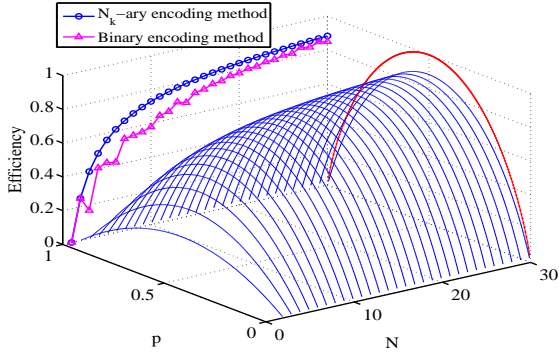}

\caption{(Color online) The average encoding efficiency per time bin increases with $N$. $H(N,p)$ (the 3-Dimensional blue curve) converges to $S(p)$ (the dotted red curve) with infinite $N$. The projection on the left side are the efficiencies of $N_k$-ary (the circular blue curve) and binary (the triangular pink curve) encoding methods for different $N$ when $p=\frac{1}{2}$. The projection shows that the encoding efficiency will converge to 1 with infinite $N$ when $p=\frac{1}{2}$ . The corresponding efficiency after the binary expansion is lower ($N>2$) but will converge to the $N_k$-ary one at large $N$.}
\label{fig:wangf2}
\end{figure}

The more subblocks $N\choose k$ divides into, the fewer possible combinations and thus the less uncertainty of the subblock there will be. In addition, the uncertainties among different subblocks (blocks) are not utilized in both encoding methods above. Thus, more subblocks indicate more uncertainty among subblocks and hence less extracted entropy and encoding efficiency, as the total entropy is conserved. The output sequences in binary representation are therefore obtained at the cost of entropy or efficiency, and the efficiency becomes
\begin{equation}
\begin{aligned}
& H_b(N,p)=\frac{1}{N}\sum_{k=1}^{N-1} p^k(1-p)^{N-k}(\alpha_{m_k}2^{m_k}m\\
&\qquad\qquad\qquad\qquad+\alpha_{{m-1}_k}2^{{m-1}_k}(m-1)\\
&\qquad\qquad\qquad\qquad+\cdots+\alpha_{0_k}2^{0_k}\cdot 0),
\end{aligned}
\end{equation}
 where the subscript $k$ means there are $k$ "1" in the block. The efficiency after expansion is shown in Figure \ref{fig:wangf2} (the projection in triangular pink curve). Moreover, more blocks mean more resources to be required.

Afterpulsing effect will lead to bias of IID events above. Its influence on QRNG will be discussed in Section \ref{Results and discussion}. It should be note that a similar spatial encoding method has been proposed for a different physical system by Marangon et.al.\cite{Marangon2014}.

\section{Experimental setup}
\label{Experimental setup}

\begin{figure}
\centering
\resizebox{8.5cm}{3cm}{\includegraphics{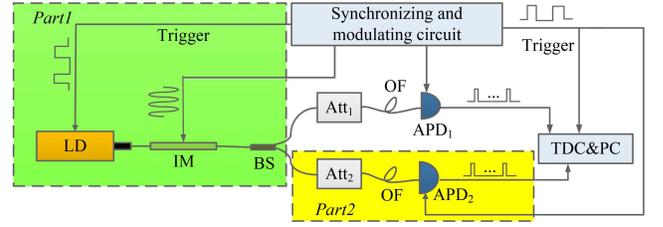}}
\caption{(Color online) Schematic setup of the experiment. LD: Laser diode; IM: optical intensity modulator; BS: beam splitter; Att: electronically variable optical attenuators (EVOA); OF: optical fiber; APD: avalanche photodiode; TDC: time-to-digital converter; PC: personal computer}
\label{fig:wangf3}
\end{figure}

Three scenarios were designed in order to evaluate the feasibility and the robustness of this scheme. \textbf{(a)} Avalanche singles from a single APD was acquired and data were encoded according to the method of Section \ref{RNG scheme} in order to verify if the method can generate high-quality raw RNs. A laser diode (LD) was added in this setup as an optional light source to increase the RN generation rate. By modulating the power of LD, we simulated an additional noise and the variation of the avalanche efficiency, so that the robustness of scheme was evaluated. \textbf{(b)} We added an additional APD to the system of setup (a). Two APDs were grouped, and the avalanche pulses were gathered and processed parallelly to generate RNs. This setup was to evaluate the possibility to increase the RN generation rate with APD arrays while keeping the high-quality feature of RNs. \textbf{(c)} In order to demonstrate the scheme can work properly without light source, the LD of setup (b) was removed and the avalanche pulsed were generated only by dark counts of APDs.

The system diagram is shown in Figure \ref{fig:wangf3}. A pulse LD with the wavelength of 1550 $nm$ was used as an optional light source and was trigged by 1 MHz electronic pulses. An intensity modulator (IM) following the LD was used to modulate the power of light pulses from LD. The output light pulses from IM were divided into two parts by a beam splitter (BS) and were attenuated to the single photon level by two electronically variable optical attenuators (EVOA). Then light pulses of different paths were coupled to two APDs (PGA-300, Princeton Lightwave), individually. The APDs worked in Geiger mode. The trigger frequency was 1 MHz and the gate width was 2.5 $ns$.

APDs used to detect single photons are commonly cooled from -30 ${}^{\circ}$C to -50 ${}^{\circ}$C in order to reduce the dark count rate, such as in quantum key distribution applications. In our experiments, dark count of APDs can be used as a resource to generate RNs as well as external photons. Thus, the cooling processes for APDs are not necessary, which makes the system more practical and less expensive. The APDs in our experiments were worked at RT (approximately 23 ${}^{\circ}$C).

The avalanche pulses of the two APDs were discriminated and amplified, then were sent into a time-to-digital converter (Agilent U1051A Acqiris TC890) to be processed. The TDC has one input channel of start signal and 6 input channels of stop signals, and can convert the time intervals between the stop and start signals into 32-bit numbers. The discrimination results from TDC were sequentially numbered with the time bin of 1 $\mu s$, and consequently encoded into RNs according to the method of Section \ref{RNG scheme} in real time. As a proof-of-principle experiment, the encoding process was executed every four successive detection windows. All trigger signals in the system were synchronized by a home-made circuit and the delay among them could be adjusted in the step of 10 $ps$, respectively.

The system could be divided into three parts, as shown in Figure \ref{fig:wangf3}. In Scenario (a), Part 2 was removed, thus single APD was used to generate RNs consequently. In Scenario (c), Part 1 was removed, and the RNs were exclusively generated by dark counts of APDs. We added a sinusoidal driving signal to the IM in Scenario (a) and Scenario (b) with frequency of 0.05 Hz and amplitude of 3 V in order to simulate an external noise. The average counting probabilities of the two APDs were initialized to 0.5 by adjusting the EVOAs ahead of them individually. According to the sinusoidal modulation of IM, the probabilities of the APDs varied from 0.3 to 0.7, which could be regarded as an external noise to APDs.

\section{Results and discussion}
\label{Results and discussion}

In our experiments, we set $N$ as 4 as discussed. The total 16 types of detection results were classified as 5 subsets according to the $k$ value. Two of these sixteen types of detection results, subsets with $k=0$ and $k=4$, were abandoned, while the other fourteen were set to generate RNs. \textbf{The generation rates} of Scenario (a) and Scenario (b) are functions of time $t$. The encoding efficiency after binary expansion of Scenario (a) is $H_{b}(N,p(t))=\frac{1}{20}\int_{0}^{20}H_{b}(N,p(t))\ dt$, where $p(t)=0.5+0.2sin(0.1\pi t)$. Substituting $N$ with $4$, we obtain $H_{b}(4,p(t))=0.3454$, and the corresponding generation rates is about $0.34$ Mbps. Scenario (b) has a double generation rate of about $0.69$ Mbps. As dark count of APDs used in Scenario (c) is relatively low, the encoding efficiency per APD is $H_{b}(4,0.01)=0.0197$ and the practical generation rate is about $0.04$ Mbps.

With current technologies, the generation rate of QRNG can be 100 Mbps to Gbps with the cost of utilizing stable and high resolution equipments \cite{Patel2012,Scarcella2015,Jofre,Xu,Yuan2014}. Although the generation rate of our proof-in-principle experiment is relatively much lower comparing with existing results, it can be remarkably increased with some measures. Firstly, using the APD array with high integration density can generate random bits concurrently with an acceptable cost growth, as has been demonstrated in our experiment. Secondly, the higher generation rate can be acquired with higher gating frequency of APD which can work exceed 2GHz \cite{Patel2012,Scarcella2015}. Thirdly, benefiting from the simple encoding method of the system, larger $N$ can be employed to improve the encoding efficiency while the random generation rate will not be restricted by the processing procedure of the raw key bits. 
\begin{figure}
\centering
\resizebox{8cm}{5cm}{\includegraphics{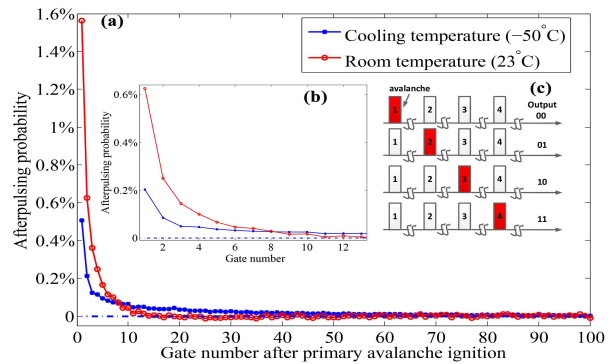}}
\caption{(Color online) (a) (b) Afterpulsing probability per gate (the gate frequency is 10 MHz) at room temperature (in red circle) and cooling temperature (in blue dot). (c) The schematic time sequence diagram of encoding process, where $N=4$ and $k=1$.}
\label{fig:wangf4}
\end{figure}

\textbf{Afterpulsing} is correlated to the primary avalanche \cite{Itzler2011}. Afterpulsing effects lead to bias of RNs generated. We compared afterpulsing probabilities of APD at cooling temperature (CT, $-50{}^{\circ}$) and RT, as shown in Figure \ref{fig:wangf4}. The data were measured under Geiger mode with 10 MHz gating frequency. The total afterpulsing probability at RT is 3.3\%, while it is 1.8\% at CT. According to Figure \ref{fig:wangf4}(a) (b), afterpulsing probability at RT is much larger for the top gates, but it decreases rapidly to zero and becomes smaller than that at RT after the eighth gate. 

Let $p_{a}(i)$ be the afterpulsing probability of the $i$-th gate after primary avalanche ignition. The probabilities of original IID Bernoulli trials are not equal any more and extracted entropy becomes less. For the case of $k=1$, showing in Figure \ref{fig:wangf4} (c), the probabilities of different events become
\begin{equation}
\begin{aligned}
& P_{k=1}(1)=p(1-p-p_a(1))(1-p-p_a(2))(1-p-p_a(3));\\
& P_{k=1}(2)=(1-p)p(1-p-p_a(1))(1-p-p_a(2));\\
& P_{k=1}(3)=(1-p)(1-p)p(1-p-p_a(1));\\
& P_{k=1}(4)=(1-p)(1-p)(1-p)p.
\end{aligned}
\end{equation}
where $P_{k=1}(i)$ represents the probability that the avalanche happens in the $i$-th time bin. For experiments here (the gating frequency is 1 MHz), $p_a(1)=4.3\times 10^{-4}, p_a(2)=p_a(3)=0,P_{k=1}(1)=0.062446,P_{k=1}(2)=0.062446,P_{k=1}(3)=0.062446,P_{k=1}(4)=0.0625$, where $p=0.5$ is adopted. The extracted entropy becomes $S=-\sum P_{k=1}(i)\log_{2}P_{k=1}(i)=2-10^{-7}$. The reduction of extracted entropy is negligible. When gating frequency is 10 MHz, the afterpulsing effect is more but not that significant. The corresponding extracted entropy reduces as less as $2.8\times 10^{-4}$ but the corresponding generation rate will be $6.9$ Mbps. For higher gating frequency and count rate, the afterpulsing probability is not a catastrophic problem. 2 GHz gating frequency and count rate as high as 650 Mcount/s InGaAs APDs have been realized, respectively \cite{Patel2012,Scarcella2015}. The afterpulsing probabilities, 4\% for reference \cite{Patel2012} and 1.5\% for reference \cite{Scarcella2015}, are still within the same level. For multi-avalanche case of high speed APD, the analysis of afterpulsing effect is much more complex and requires further study. 

We analyzed the \textbf{uniformity} of the RNs generated from these experiments to test the independence of different detection events. Only the uniformity of detection results of Scenario (a) are demonstrated, as shown in Figure\ \ref{fig:wangf5}, because the other two experiments contain very similar properties. Let $p(x,\ y)$ represent the population of elements of a $16\times 16$ matrix, which indicates that neighboring detection results $x$ and $y$ happen successively, and integers $x,\ y\in {0,\ 1,\ 2,\cdots,\ 15}$ represent the $16$ possible detection results for every four consecutive detections.

In Figure \ref{fig:wangf5}, different colors indicate different population $p(x,y)$. Clearly, the matrix is symmetric and is partioned into subblocks. The symmetric matrix shows that $p(x,y)=p(y,x)$ for any $x,y$. It means that there is no time correlation between successive $x$ and $y$. $p(x,y)$ of different elements in the same subblock are identical. It means the population is uniform in subblocks. Thus $p(x,y)=p(y,x)=p(x)p(y)$. This represents that each detection event is independent, and the imperfections of the beam splitter and detection efficiencies make no difference to the uniformity of the randomness extraction system. The experimental results are consistent with the theory we proposed in Section \ref{RNG scheme}.

\begin{figure}
\centering
\resizebox{8cm}{5cm}{\includegraphics{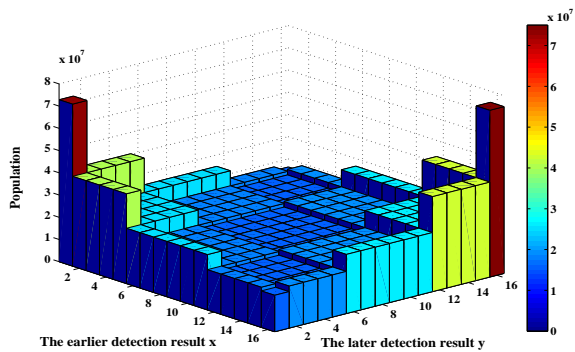}}
\caption{(Color online) The uniformities of RNs output from the RNG scheme. For every four detection windows, $x$ or $y=0$ indicates that no avalanche pulse is detected, $x$ or $y\in [1,4]$ indicates that one avalanche pulse is detected, $x$ or $y\in [5,10]$ indicates that two avalanche pulses are detected, $x$ or $y\in [11,14]$ indicates that three avalanche pulses are detected and $x$ or $y=15$ means four avalanche pulses are detected, where $x,\ y\in integer$. The altitude represents the population of detection results in which $x$ and $y$ happen successively.}
\label{fig:wangf5}
\end{figure}

\textbf{Min-entropy} evaluation was employed for all these experiments.Min-entropy, defined as
\begin{equation}
H_\infty=-{log}_2\{max\ p(x_i)\},
\end{equation}
is the evaluation of the worst situation, where $p(x_i)$ is the probability of possible output $x_i$, and $max\ p(x_i)$ is the maximal value of all $p(x_i)$. It is a strong way to measure the information content, while Shannon entropy is a weighted average evaluation. Both min-entropy and Shannon entropy are special cases of R\'enyi entropy \cite{Renyi,Bromiley}. Shannon entropy is the upper bound of min-entropy, and they coincide if and only if the distribution of the variable is uniform \cite{Cachin,Smith}. Min-entropy evaluation is therefore a good way to evaluate the quality of randomness of RNs.
\begin{figure}
\centering
\resizebox{8cm}{6cm}{\includegraphics{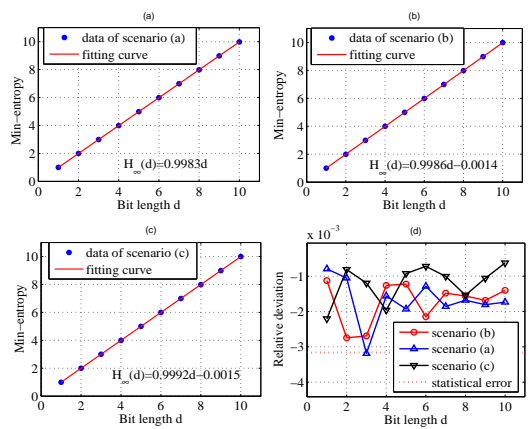}}
\caption{(Color online) (a) (b) (c) Min-entropy of samples (data point) output from scenarios (a), (b) and (c), respectively. The linear fitting function (fitting line) is shown. (d) The relative deviations between min-entropy and Shannon entropy of uniform distribution of all three scenarios.}
\label{fig:wangf6}
\end{figure}

The min-entropy of raw data from all scenarios were evaluated, where $i=0,1,\cdots,d-1$ and $p(x_0),p(x_1),\cdots,p(x_d-1)$ represent the probabilities of "0", "1",$\cdots$, "d-1" in binary representation respectively. As shown in Figure \ref{fig:wangf6}, the results show that deviations between min-entropy and Shannon entropy of uniform distribution are all of the order of 0.001. This is within the statistical error ($\sim 3.2\times 10^{-3}$, the red dot line in Figure \ref{fig:wangf6}(d)), as the statistical amount is $10^5\cdot 2^d$ bits, which indicates a good quality of randomness of the RNs.

The raw binary RN samples output by using the binary encoding method were tested using \textbf{NIST statistical test suit} \cite{http}. The standard statistical test suite, containing 15 subtests, calls for a 1-Gbit sample. The standard statistical test outputs two values, the p-value and the pass proportion, for each item. The sample passes the standard statistical test if and only if the p values and proportions of all items are larger than 0.0001 and 0.9805608, respectively. 20 samples for both Scenario (a) and Scenario (b) and 7 samples for Scenario (c) were tested by the standard statistical test. The testing results are shown in Table \ref{tab:1}. 2 samples of Scenario (a), 4 samples of Scenario (b) and 1 sample of Scenario (c) failed the statistical test. All failed samples only failed the NonOverlappingTemplate test. The focus of NonOverlappingTemplate test is the number of occurrences of pre-specified target strings. The purpose of this test is to detect generators that produce too many occurrences of a given non-periodic (aperiodic) pattern. The test outputs 148 group of p-values and pass proportions with different pre-specified 9-bit target strings. The sequence has irregular occurrences of the possible template patterns and fails the test if the p-value is smaller than the preset value. Only one of the 148 groups failed the preset value of pass proportion, 0.980, for every failed sample. The minimum value of pass proportions is 0.975, which is very close to 0.980. We cannot find out the reason of failures and suspect that afterpulsing effect is a potential candidate.

\begin{table*}
\centering
\caption{The standard statistical test results of NIST. Twenty samples of 1 Gbit were tested for 
Scenario (a) and Scenario (b) and 7 samples for Scenario (c), as the dark count rate was lower. For the tests outputting multiple p values and proportions, the worst case was adopted.}
\begin{tabular}{ccccccc}
\hline \hline
 & \multicolumn{6}{c}{\bf Passed} \\
\cline{2-7}
\bf Testing item & \multicolumn{2}{c}{\bf Scenario (a)} & \multicolumn{2}{c}{\bf Scenario (b)} & \multicolumn{2}{c}{\bf Scenario (c)}\\
 & \bf \scriptsize Proportion & \bf \scriptsize p-value & \bf \scriptsize Proportion & \bf \scriptsize p-value & \bf\scriptsize Proportion & \bf \scriptsize p-value \\
\hline
\bf Frequency &	20/20 &	20/20 &	20/20 &	20/20 & 7/7 & 7/7 \\
\bf BlockFrequency & 20/20 & 20/20 & 20/20 & 20/20 & 7/7 & 7/7 \\
\bf CumulativeSums & 20/20 & 20/20 & 20/20 & 20/20 & 7/7 & 7/7 \\
\bf Run & 20/20 & 20/20 & 20/20 & 20/20 & 7/7	& 7/7 \\
\bf LongestRun & 20/20 & 20/20 & 20/20 & 20/20 & 7/7 & 7/7 \\
\bf Rank & 20/20 & 20/20 & 20/20 & 20/20 & 7/7	& 7/7 \\
\bf FFT & 20/20 & 20/20 & 20/20 & 20/20 & 7/7	& 7/7 \\
\bf NonOverlappingTemplate & 18/20 & 20/20 & 16/20 & 20/20 & 6/7 & 7/7 \\
\bf OverlappingTemplate & 20/20 & 20/20 & 20/20 & 20/20 & 7/7	& 7/7 \\
\bf Universal & 20/20 & 20/20 & 20/20 & 20/20 & 7/7	& 7/7 \\
\bf ApproximateEntropy & 20/20 & 20/20 & 20/20 & 20/20 & 7/7	& 7/7 \\
\bf RandomExcursions & 20/20 & 20/20 & 20/20 & 20/20 & 7/7	& 7/7 \\
\bf RandomExcursionsVariant & 20/20 & 20/20 & 20/20 & 20/20 & 7/7	& 7/7 \\
\bf Serial & 20/20 & 20/20 & 20/20 & 20/20 & 7/7	& 7/7 \\
\bf LinearComplexity & 20/20 & 20/20 & 20/20 & 20/20 & 7/7	& 7/7 \\
\hline \hline
\end{tabular}
\label{tab:1}
\end{table*}

It is worth noting that samples to be tested were extracted continuously by days. Thus, interferences that may affect the experiments were more complex. In all scenarios, no special measures were adopted to reduce the interference of background light noise and temperature fluctuation. Despite a few failed tests, the results of the tests and analysis above indicate good quality of randomness of raw data from all of these three scenarios and postprocessing is not necessary. The results of Scenario (a) were consistent with the theory analysis and show the robust of our RNG from slowly varying interferences. Scenario (b) and Scenario (c), with double APD, suggest an APD array scheme, which is promising to break through the generation rate limitation. Scenario (c) also gave a relatively strict proof that dark count of APD is usable for RNG. The light source for this scheme, to be or not to be, is not a question any more.

It should be mentioned that although the RNG can generate high quality raw random bits, additional postprocessing methods \cite{Ma2013} still can be used to further improve the quality of the final output. The generic framework of randomness evaluating method and postprocessing algorithms proposed by reference \cite{Ma2013} provides an instructive guideline for design of the random signal extractors to achieve a tradeoff between the quality of randomness and the cost.

\section{Conclusion}
\label{Conclusion}

In conclusion, we have proposed and realized a robust and high-efficiency TRNG scheme. Dark counts of APD can be used as a resource in this scheme, so that a deep cooling process is not necessary and the system can work at RT. The fluctuation of pulse intensity arriving at the APD and slowly varying interferences affect only the efficiency of the RNG rather than the randomness of the RN series, so the scheme is compatible with light source and background photons. The experimental results also indicate the feasibility of integrating an APD array into a RNG chip, which can effectively increase the generation rate of RNs even uo to Gbps and make the scheme have more practical value.

\section*{Acknowledgements}\
\label{Acknowledgments}
This work has been supported by the National Basic Research Program of China (Grants No. 2011CBA00200 and No. 2011CB921200), the National Natural Science Foundation of China (Grant Nos. 61101137, 61201239, 61205118, 11304397). Fang-Xiang Wang and Chao Wang contributed equally to this work.

\section*{Appendix}
\label{Appendix}

\begin{lemma}

Define a function $f$ with expression below
\begin{displaymath}
f(N,k,p)=p^k(1-p)^{N-k}+p^{N-k}(1-p)^k
\end{displaymath}
Then, if $N\geq 2$ and $p\in (0,\ \frac{1}{2})\cup (\frac{1}{2},1)$, the equation about $k$
\begin{equation}
f(N,k,p)=f(N,k,\frac{1}{2})
\label{equ:1}
\end{equation}
has two and only two roots in $[0,N]$.
\label{lem:1}
\end{lemma}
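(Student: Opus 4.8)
The plan is to treat $k$ as a continuous real variable on $[0,N]$ and reduce the statement to a convexity argument. First I would simplify the right-hand side: substituting $p=\tfrac12$ into the definition gives $f(N,k,\tfrac12)=(\tfrac12)^k(\tfrac12)^{N-k}+(\tfrac12)^{N-k}(\tfrac12)^k=2^{1-N}$, a constant independent of $k$. Hence Eq.~(\ref{equ:1}) is equivalent to $f(N,k,p)=2^{1-N}$, and the task becomes counting how often the curve $k\mapsto f(N,k,p)$ meets the horizontal level $2^{1-N}$.

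Next I would pin down the shape of $k\mapsto f(N,k,p)$. Writing $q=1-p$ and $\beta=\ln(p/q)$, one has $f(N,k,p)=q^N e^{\beta k}+p^N e^{-\beta k}$, so $f''=\beta^2 f>0$ because $p\neq\tfrac12$ forces $\beta\neq 0$ while $f$ is manifestly positive; thus $f$ is strictly convex in $k$. The defining expression is symmetric under $k\mapsto N-k$, so strict convexity places the unique minimum at the interior point $k=N/2$, with value $f_{\min}=2(pq)^{N/2}$, whereas the endpoints give $f(N,0,p)=f(N,N,p)=p^N+q^N$. A strictly convex function symmetric about $N/2$ is strictly decreasing on $[0,N/2]$ and strictly increasing on $[N/2,N]$, so it attains a level $C$ exactly twice precisely when $f_{\min}<C<p^N+q^N$, with one root on each side of $N/2$, both interior.

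It then remains to verify the two strict inequalities $f_{\min}<2^{1-N}<p^N+q^N$. The lower bound $2(pq)^{N/2}<2^{1-N}$ is equivalent to $pq<\tfrac14$, which holds by AM--GM since $p+q=1$ and $p\neq q$. The upper bound $2^{1-N}<p^N+q^N$ follows from the strict convexity of $t\mapsto t^N$ (Jensen's inequality applied to $p$ and $q$ with equal weights $\tfrac12$), again strict because $p\neq q$. With both inequalities established, the level $2^{1-N}$ lies strictly between the minimum value and the common endpoint value, so the equation has exactly two roots, both in the open interval $(0,N)\subset[0,N]$.

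The argument is essentially routine once the convexity in $k$ is identified; I expect the only genuine care to lie in the bookkeeping of strict versus weak inequalities, namely confirming that the hypothesis $p\neq\tfrac12$ makes every relevant bound strict, so that the root count is exactly two rather than degenerating to a single root at $k=N/2$ or acquiring boundary roots at $k=0,N$.
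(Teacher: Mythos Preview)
Your proof is correct and follows essentially the same strategy as the paper's: both show that $k\mapsto f(N,k,p)$ is U-shaped on $[0,N]$, that the common endpoint value $p^N+q^N$ exceeds the constant level $2^{1-N}$, and that the minimum $2(pq)^{N/2}$ at $k=N/2$ falls below it, giving exactly two crossings. The paper obtains the U-shape via the substitution $a=(p/q)^k$ and a first-derivative analysis of $(1-p)^N a+p^N/a$, whereas your direct observation $f''=\beta^2 f>0$ together with AM--GM and Jensen for the two bounds is a slightly slicker execution of the same idea rather than a genuinely different route.
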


\begin{proof}

First, we simplify equation (\ref{equ:1}) below
\begin{equation}
(1-p)^N\ (\frac{p}{1-p})^k+p^N\ (\frac{1-p}{p})^k=2\cdot(\frac{1}{2})^N.
\label{equ:2}
\end{equation}
The left of equation is symmetric about $p=\frac{1}{2}$, so we only need to consider the case $p\in (\frac{1}{2},1)$, and then $\frac{p}{1-p}>1$. Let $a=(p/(1-p))^k$. Then for any $k\in [0,N]$, we have $a\in [1,(\frac{p}{1-p})^N]$. As $k$ and $a$ are one to one, we simplify equation (\ref{equ:2}) and obtain
\begin{displaymath}
(1-p)^N\ a+p^N\ \frac{1}{a}=2\cdot \frac{1}{2}.
\end{displaymath}

Define the function
\begin{displaymath}
\varphi (a)=(1-p)^N\ a+p^N\ \frac{1}{a}-2\cdot \frac{1}{2}.
\end{displaymath}
as $p\in (\frac{1}{2},1)$, we have
\begin{displaymath}
\varphi (1)=\varphi [(\frac{p}{1-p})^N)]=(1-p)^N+p^N-2\cdot \frac{1}{2}>0.
\end{displaymath}
Next,the derivative of $\varphi$ is
\begin{displaymath}
\varphi' (a)=(1-p)^N)-p^N\ \frac{1}{a^2}.
\end{displaymath}
Let $\varphi' (a_0)=0$, $a_0\in (1,(\frac{p}{1-p})^N)$ we obtain $a_0=\sqrt{(\frac{p}{1-p})^N}$. Furthermore, if $a\in [1,\sqrt{(\frac{p}{1-p})^N})$, then $\varphi' (a)<0$;if  $a\in (\sqrt{(\frac{p}{1-p})^N},(\frac{p}{1-p})^N]$, then $\varphi' (a)>0$. In addition, $\varphi [(\frac{p}{1-p})^N)]=2\ \sqrt{p^N\ (1-p)^N)}-2\cdot \frac{1}{2}<0$, so the equation $\varphi (a)=0$ has two roots in $(1,{\frac{p}{1-p}}^N)$.

As $k$ and $a$ are one to one, it is easy to prove that equation (\ref{equ:1}) has two and only two roots in (0,N), which we denote as $ x_1 $ and $ x_2 $, $ x_1<x_2 $, then if $ k\in [0,x_1)\cup(x_2,N] $, $f(N,k,p)>f(N,k,\frac{1}{2})$; if $k\in (x_1,x_2)$, $f(N,k,p)<f(N,k,\frac{1}{2})$. In addition, as $f(N,k,p)=f(N,N-k,p)$, we have $x_1+x_2=N$.

This completes the proof of Lemma.

\end{proof}

\begin{theorem}

For any integer $N\geq 2$, the optimal $p$ for normalized extracted entropy $H(N,p)$ is $\frac{1}{2}$, and $H(N,p)\to S(p)$ as $N\to \infty$, where $S(p)$ is the Shannon entropy of a single Bernoulli trial, and $H(N,p)$ is defined as
\begin{displaymath}
H(N,p)=-\frac{1}{N}
\sum_{k=1}^{N-1}N_k\  p^k\ (1-p)^{N-k}\ (log_2\frac{1}{N_k}),
\end{displaymath}
and $N_k={N\choose k}$ is the binomial coefficient.
\label{theo:1}
\end{theorem}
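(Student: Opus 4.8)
The plan is to reduce the statement to one weighted inequality plus a standard entropy asymptotic, using a representation of $H(N,p)$ as an expectation. Writing $q=1-p$ and $c_k=\binom{N}{k}\log_2\binom{N}{k}\ge 0$, the two sign flips in $-\log_2(1/\binom{N}{k})$ cancel, so $H(N,p)=\frac1N\sum_{k=1}^{N-1}c_k\,p^kq^{N-k}$. Since $c_0=c_N=0$ the excluded endpoints are harmless and $H(N,p)=\frac1N\,\mathbb{E}\big[\log_2\binom{N}{K}\big]$ with $K\sim\mathrm{Bin}(N,p)$. Exploiting $c_k=c_{N-k}$ and averaging the sum over the involution $k\mapsto N-k$ produces $H(N,p)=\frac{1}{2N}\sum_{k=0}^{N}c_k\,f(N,k,p)$, where $f(N,k,p)=p^kq^{N-k}+p^{N-k}q^k$ is exactly the function of Lemma~\ref{lem:1}. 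This is the bridge that lets the Lemma do the work.

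For the optimality of $p=\tfrac12$, the inequality $H(N,p)\le H(N,\tfrac12)$ is equivalent to $\sum_{k=0}^{N}c_k\,g(k)\ge 0$, where $g(k)=f(N,k,\tfrac12)-f(N,k,p)$. By Lemma~\ref{lem:1}, for $p\ne\tfrac12$ the weight $g$ has exactly the symmetric two-sign-change profile: $g(k)\le 0$ on the wings $k<x_1$ and $k>x_2=N-x_1$, and $g(k)\ge 0$ on the central block $[x_1,x_2]$. Since $c_k$ is nonnegative, symmetric about $N/2$, and increasing toward the centre (because $\binom{N}{k}$ is, and $t\mapsto t\log_2 t$ increases for $t\ge 1$), the positive central contributions carry the large weights while the negative wing contributions carry the small ones. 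I would make this precise either by a discrete layer decomposition of the weight, $c_k=\sum_{a\ge 1}\Delta_a\,\mathbf 1\{a\le k\le N-a\}$ with increments $\Delta_a=c_a-c_{a-1}\ge 0$, which turns the target into a combination of central-band sums of $g$, or equivalently by differentiating: $\partial_p H(N,p)=\frac{1}{Npq}\,\mathrm{Cov}\!\big(\log_2\binom{N}{K},\,K\big)$, a covariance that is antisymmetric under $p\mapsto 1-p$ (so $p=\tfrac12$ is automatically critical) and whose sign for $p>\tfrac12$ is precisely what the two-crossing structure of $g$ controls.

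The step I expect to be the main obstacle is converting this sign information into the signed inequality, because monotonicity of $c_k$ alone does not suffice: the unweighted central-band sums $\sum_{k=a}^{N-a}g(k)$ can be strictly negative (already at $N=6$ with $p$ near $0.45$), so $\sum_k g(k)\le 0$ and a one-step summation-by-parts only yields a bound in the wrong direction. The inequality holds because $c_k$ grows fast enough toward the centre; that is, the convexity (second differences) of the weight, not merely its monotonicity, must be paired against the doubly accumulated profile of $g$. Carrying out this quantitative weight-concentration estimate—equivalently, showing $\mathrm{Cov}(\log_2\binom{N}{K},K)\le 0$ for all $p>\tfrac12$ by writing it as $\sum_j h(j)T(j)$ with $h(j)=\log_2\frac{N-j+1}{j}$ decreasing and $T(j)=\sum_{k\ge j}(k-Np)P(K{=}k)\ge 0$, and pairing $j$ with $N+1-j$—is the technical heart of the argument.

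For the limit $H(N,p)\to S(p)$, I would use the entropy form of Stirling's bounds, $\frac{2^{N S(k/N)}}{N+1}\le\binom{N}{k}\le 2^{N S(k/N)}$, valid for all $0\le k\le N$, which give $\big|\frac1N\log_2\binom{N}{K}-S(K/N)\big|\le\frac{\log_2(N+1)}{N}$ deterministically. Hence $H(N,p)=\mathbb{E}[S(K/N)]+O\!\big(\tfrac{\log N}{N}\big)$. Since $K/N\to p$ in probability by the weak law of large numbers and $S$ is bounded and continuous on $[0,1]$, bounded convergence gives $\mathbb{E}[S(K/N)]\to S(p)$, and therefore $H(N,p)\to S(p)$ as $N\to\infty$. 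This half is routine once the expectation representation is in place; all the difficulty sits in the weight-concentration inequality above.
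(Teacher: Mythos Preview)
Your symmetrisation $H(N,p)=\frac{1}{2N}\sum_{k}c_k\,f(N,k,p)$ and the appeal to Lemma~\ref{lem:1} for the two-crossing profile of $g(k)=f(N,k,\tfrac12)-f(N,k,p)$ match the paper exactly, and your treatment of the limit $H(N,p)\to S(p)$ via the entropy form of Stirling plus the weak law is correct and is essentially the paper's argument in different packaging.

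The gap is in the optimality part, and it is self-inflicted. By bundling the two factors into $c_k=N_k\log_2 N_k$ you are forced to examine the \emph{unweighted} band sums $\sum_{k=a}^{N-a}g(k)$, which, as you observe, can be negative; you then reach for convexity of $c_k$ or an unfinished covariance argument. The paper avoids all of this by keeping the factors separate: view the target as $\sum_k\big(g(k)N_k\big)\cdot\log_2 N_k$. The decisive identity is that the $N_k$-\emph{weighted} differences sum to zero, $\sum_{k=0}^{N}g(k)N_k=0$, simply because $\sum_k N_k\,f(N,k,p)=2$ for every $p$. Together with the two-crossing structure from the Lemma, this makes the $N_k$-weighted band sums $\sum_{k=a}^{N-a}g(k)N_k$ nonnegative for \emph{every} $a$: for $a\le x_1$ you have removed only wing terms with $g\le 0$ from a zero total, and for $a>x_1$ the band lies entirely in the centre where $g\ge 0$. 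So layer-decompose the unimodal weight $\log_2 N_k$ rather than $c_k$, and monotonicity alone finishes the inequality---no second differences, no covariance pairing, no quantitative estimate. The paper phrases the same step without the layer language: bound $\log_2 N_k$ above by $\log_2 N_{\lfloor x_1\rfloor}$ on the wings and below by $\log_2 N_{\lfloor x_1\rfloor+1}$ on the centre, then collapse via the zero-sum identity to a single nonnegative term.
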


\begin{proof}
Let
\begin{displaymath}
f(N,k,p)=p^k(1-p)^{N-k}+p^{N-k}(1-p)^k,
\end{displaymath}
then
\begin{equation}
\begin{aligned}
H(N,p)&=-\frac{1}{N}\sum_{k=1}^{N-1}N_k\  p^k\ (1-p)^{N-k}\ (log_2\frac{1}{N_k})\\
&=\frac{1}{N}
\sum_{k=0}^{N}N_k\  p^k\ (1-p)^{N-k}\ (log_2N_k)\\
&=\frac{1}{2N}
\sum_{k=0}^{N}[ p^k\ (1-p)^{N-k}\\
&\quad +p^{N-k}\ (1-p)^k]N_k (log_2N_k)\\
&=\frac{1}{2N}
\sum_{k=0}^{N}f(N,k,p)\ N_k (log_2N_k)
\end{aligned}
\label{equ:3}
\end{equation}
It is clear that when $p\in (0,\ \frac{1}{2})\cup (\frac{1}{2},\ 1)$,
\begin{displaymath}
\sum_{k=0}^{N}f(N,k,p)\ N_k=\sum_{k=0}^{N}f(N,k,\frac{1}{2})\ N_k=2,
\end{displaymath}
and $f(N,0,p)=(1-p)^N+p^N>2\cdot (\frac{1}{2})^N=f(N,\ 0,\ \frac{1}{2})$,
Thus, there exists an integer $k_0\in (0,N)$ such that $f(N,k_0,p)< f(N,k_0,\frac{1}{2})$. Additionally, from the proof of \textbf{Lemma \ref{lem:1}} we know there exist $x_1,x_2\in(0,N)$, $x_1<x_2$ and $x_1+x_2=N$ such that $f(N,x_1,p)- f(N,x_1,\frac{1}{2})=f(N,x_2,p)- f(N,x_2,\frac{1}{2})=0$, and if and only if $k\in(x_1,x_2)$, $f(N,k,p)<f(N,k,\frac{1}{2})$; thus, we have $k_0\in(x_1,x_2)$. In addition, if $x_1,x_2$ are integers, we have $k_0\in[x_1+1,x_2-1]$; otherwise $k_0\in[\lfloor x_1\rfloor+1,\lfloor x_2\rfloor]$, where $\lfloor x_1\rfloor$ represents the largest integer that is not larger than $x_1$.

With the conclusion above, we will show when $N\geq 2$ and $p\in(0,\ \frac{1}{2})\cup(\frac{1}{2},\ 1)$, we have $H(N,p)<H(N,1/2)$. There are two cases.\\
\textbf{Case 1:} $x_1,x_2$ are not integers, then
\begin{displaymath}
\begin{aligned}
&\quad 2\sum_{k=0}^{\lfloor x_1\rfloor}[f(N,k,p)-f(N,k,\frac{1}{2})]N_k\\
&=(\sum_{k=0}^{\lfloor x_1\rfloor}+\sum_{k=\lfloor x_2\rfloor +1}^{N})[f(N,k,p)-f(N,k,\frac{1}{2})]N_k\\
&=[2-\sum_{k=\lfloor x_1\rfloor +1}^{\lfloor x_2\rfloor}f(N,k,p)\ N_k]\\
&\quad -[2-\sum_{k=\lfloor x_1\rfloor +1}^{\lfloor x_2\rfloor}f(N,k,p)\ N_k]\\
&=\sum_{k=\lfloor x_1\rfloor +1}^{\lfloor x_2\rfloor}[f(N,k,\frac{1}{2})-f(N,k,p)]N_k.
\end{aligned}
\end{displaymath}
and
\begin{displaymath}
\begin{aligned}
&\quad\  H(N,p)-H(N,\frac{1}{2})\\
&=\sum_{k=0}^{N}[f(N,K,P)-f(N,k,\frac{1}{2})]\ \frac{N_k}{2N}\ log_2N_k\\
&=(\sum_{k=0}^{\lfloor x_1\rfloor}+\sum_{k=\lfloor x_2\rfloor +1}^{N})[f(N,k,p)-f(N,k,\frac{1}{2})]\ \frac{N_k}{2N}\ log_2N_k\\
&\quad +\sum_{k=\lfloor x_1\rfloor +1}^{\lfloor x_2\rfloor}[f(N,k,p)-f(N,k,\frac{1}{2})]\ \frac{N_k}{2N}\ log_2N_k\\
&=2\sum_{k=0}^{\lfloor x_1\rfloor}[f(N,k,p)-f(N,k,\frac{1}{2})]\ \frac{N_k}{2N}\ log_2N_k\\
&\quad +\sum_{k=\lfloor x_1\rfloor +1}^{\lfloor x_2\rfloor}[f(N,k,p)-f(N,k,\frac{1}{2})]\ \frac{N_k}{2N}\ log_2N_k\\
&\leq 2\sum_{k=0}^{\lfloor x_1\rfloor}[f(N,k,p)-f(N,k,\frac{1}{2})]\ \frac{N_k}{2N}\ log_2N_{\lfloor x_1\rfloor}\\
&\quad +\sum_{k=\lfloor x_1\rfloor +1}^{\lfloor x_2\rfloor}[f(N,k,p)-f(N,k,\frac{1}{2})]\ \frac{N_k}{2N}\ log_2N_{\lfloor x_1\rfloor+1}\\
&=\sum_{k=\lfloor x_1\rfloor +1}^{\lfloor x_2\rfloor}[f(N,k,p)-f(N,k,\frac{1}{2})]\ \cdot\frac{N_k}{2N}\ \\
&\qquad\cdot [log_2N_{\lfloor x_1\rfloor +1}-log_2N_{\lfloor x_1\rfloor}]\\
&<0.
\end{aligned}
\end{displaymath}
The last inequality holds, because there exists integer $k_0\in [\lfloor x_1\rfloor+1,\ \lfloor x_2\rfloor]$, and $\lfloor x_1\rfloor+1\leq \lfloor x_2\rfloor=N-1-\lfloor x_1\rfloor$, thus, $\lfloor x_1\rfloor+1\leq \frac{N}{2}$, so $N_{\lfloor x_1\rfloor+1}\geq N_{\lfloor x_1\rfloor}$.\\
\textbf{Case 2:} $x_1$,$x_2$ are integers, then analogously,
\begin{displaymath}
\begin{aligned}
&\quad\  H(N,p)-H(N,\frac{1}{2})\\
&=2\sum_{k=0}^{x_1}[f(N,k,p)-f(N,k,\frac{1}{2})]\ \frac{N_k}{2N}\ log_2N_k\\
&\quad +\sum_{k=x_1+1}^{x_2-1}[f(N,k,p)-f(N,k,\frac{1}{2})]\ \frac{N_k}{2N}\ log_2N_k\\
&\leq 2\sum_{k=0}^{x_1}[f(N,k,p)-f(N,k,\frac{1}{2})]\ \frac{N_k}{2N}\ log_2N_{x_1}\\
&\quad +\sum_{k=x_1+1}^{x_2}[f(N,k,p)-f(N,k,\frac{1}{2})]\ \frac{N_k}{2N}\ log_2N_{x_1+1}\\
&=\sum_{k=x_1+1}^{x_2}[f(N,k,p)-f(N,k,\frac{1}{2})]\cdot\frac{N_k}{2N}\\
&\quad\cdot [log_2N_{x_1+1}-log_2N_{x_1}]\\
&<0.
\end{aligned}
\end{displaymath}

The last inequality holds, because there exists integer $k_0\in [x_1+1,x_2-1]$, and $x_1+1\leq x_2-1=N-x_1-1$, thus $x_1+1\leq \frac{N}{2}$, so $N_{x_1+1}>N_{x_1} $.
We have therefore proved that the optimal $p$ for normalized extracted entropy $H(N,p)$ is $\frac{1}{2}$, and we will next show the remaining part.

First, as
\begin{displaymath}
2^N=\sum_{k=0}^{N}{N\choose k}=\sum_{k=0}^{N}N_k.
\end{displaymath}
we have $\frac{log_2 N_k}{N}<1$ for $0\leq k\leq N$. Assuming that $0<p<\frac{1}{2}$, the cases that $\frac{1}{2}<p<1$ and $p=\frac{1}{2}$ are similar, so there exists $\delta >0$ sufficiently small such that $p+\delta <\frac{1}{2}$, by the weak law for a binomial distribution,
\begin{displaymath}
\lim_{N\to \infty}\sum_{p-\delta <\frac{k}{N}<p+\delta}N_k\ p^k\ (1-p)^{N-k}=1.
\end{displaymath}
Thus, given any $\epsilon>0$, there is an $\mathcal{N_0}$ such that for $N>\mathcal{N_0}$,
\begin{equation}
\sum_{\mid\frac{k}{N}-p\mid\geq \delta}N_k\ p^k\ (1-p)^{N-k}<\epsilon\ .
\label{equ:4}
\end{equation}
Thus, together with $\frac{log_2 {N_k}}{N}<1$, we have
\begin{displaymath}
\begin{aligned}
&\quad \sum_{\mid\frac{k}{N}-p\mid<\delta}N_k\ p^k\ (1-p)^{N-k}\frac{log_2 {N_k}}{N}\\
&<H(N,p)\\
&<\sum_{\mid\frac{k}{N}-p\mid<\delta}N_k\ p^k\ (1-p)^{N-k}\frac{log_2 {N_k}}{N}+\epsilon\ .
\end{aligned}
\end{displaymath}
Because $p+\delta<\frac{1}{2}$, when $\mid\frac{k}{N}-p\mid<\delta$, we have $log_2 N_{\lfloor N(p-\delta)\rfloor-1}\leq log_2N_k\leq log_2N_{\lfloor N(p+\delta)\rfloor+1}$, so
\begin{displaymath}
\centering
\begin{aligned}
&\quad \sum_{\mid\frac{k}{N}-p\mid<\delta}N_k\ p^k\ (1-p)^{N-k}\frac{log_2 N_{\lfloor N(p-\delta)\rfloor-1}}{N}\\
&<H(N,p)\\
&<\sum_{\mid\frac{k}{N}-p\mid<\delta}N_k\ p^k\ (1-p)^{N-k}\frac{log_2 N_{\lfloor N(p+\delta)\rfloor+1}}{N}+\epsilon\ .
\end{aligned}
\end{displaymath}
Together with equation (\ref{equ:4}), we get
\begin{displaymath}
\begin{aligned}
&\quad (1-\epsilon)\frac{log_2 N_{\lfloor N(p-\delta)\rfloor-1}}{N}\\
&<H(N,p)\\
&<\frac{log_2 N_{\lfloor N(p+\delta)\rfloor+1}}{N}+\epsilon\ .
\end{aligned}
\end{displaymath}
Using Stirling's formula on both side
\begin{displaymath}
(1-\epsilon)S(p-\delta)\leq \lim_{N\to \infty}H(N,p)\leq S(p+\delta)+\epsilon\ .
\end{displaymath}
As $\epsilon$ and $\delta$ are arbitrary, with the continuity of $S(p)$, we obtain $\lim_{N\to \infty}H(N,p)=S(p)$.

This completes the proof of \textbf{Theorem \ref{theo:1}}.
\end{proof}

\begin{theorem}
$H(N,\frac{1}{2})$ increases to $1$ as $N$ approaches infinity.
\label{theo:2}
\end{theorem}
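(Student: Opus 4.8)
The plan is to recast $H(N,\tfrac12)$ in information-theoretic terms, where both the monotonicity and the limit become transparent. First I would observe that $N\,H(N,\tfrac12)=\sum_{k=0}^{N}\binom{N}{k}2^{-N}\log_2\binom{N}{k}$ is exactly the conditional entropy $C_N:=H(S_N\mid X_N)$, where $S_N=(b_1,\dots,b_N)$ is a string of $N$ independent fair bits and $X_N=\sum_i b_i$ is its Hamming weight: given $X_N=k$ the string is uniform over the $\binom{N}{k}$ arrangements, contributing $\log_2\binom{N}{k}$. Since the total entropy of $S_N$ is $N$, the chain rule gives $C_N=N-H(X_N)$, where $H(X_N)$ is the Shannon entropy of a $\mathrm{Binomial}(N,\tfrac12)$ variable; equivalently $H(N,\tfrac12)=1-H(X_N)/N$. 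The theorem is therefore equivalent to showing that $C_N/N$ strictly increases to $1$, i.e. that $H(X_N)/N$ strictly decreases to $0$.

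To get monotonicity I would show that the sequence $C_N$ is strictly convex with $C_0=0$, since a convex sequence through the origin has strictly increasing secant slopes $C_N/N$. Applying the chain rule to $C_{N+1}=H(S_N,b_{N+1}\mid X_{N+1})$ and using that $(b_{N+1},X_{N+1})$ and $(b_{N+1},X_N)$ carry the same information while $b_{N+1}$ is independent of $(S_N,X_N)$, I would obtain the increment formula $\Delta_N:=C_{N+1}-C_N=H(b_{N+1}\mid X_{N+1})$. A short computation of the conditional law (given $X_{N+1}$, one has $b_{N+1}=1$ with probability $X_{N+1}/(N+1)$) then rewrites this as $\Delta_N=\mathbb{E}\,h\!\big(X_{N+1}/(N+1)\big)$, where $h$ is the binary entropy function and $X_{N+1}/(N+1)$ is the empirical mean of $N+1$ fair bits. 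Strict convexity of $C_N$ thus amounts to proving $\Delta_N$ is strictly increasing in $N$.

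The heart of the argument, and the step I expect to be the \emph{main obstacle}, is proving $\mathbb{E}\,h(\bar S_{n+1})\ge \mathbb{E}\,h(\bar S_n)$ for the empirical means $\bar S_n$ of i.i.d. fair bits. I would do this by a leave-one-out/Jensen argument: writing $\bar S_n^{(j)}$ for the average of the $n$ bits omitting the $j$-th, exchangeability gives each $\bar S_n^{(j)}\stackrel{d}{=}\bar S_n$, while the identity $\bar S_{n+1}=\frac{1}{n+1}\sum_{j=1}^{n+1}\bar S_n^{(j)}$ holds. Since $h$ is strictly concave, Jensen yields $h(\bar S_{n+1})\ge \frac{1}{n+1}\sum_j h(\bar S_n^{(j)})$, and taking expectations gives $\mathbb{E}\,h(\bar S_{n+1})\ge \mathbb{E}\,h(\bar S_n)$, strict because the $\bar S_n^{(j)}$ are not almost surely equal. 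This establishes strict convexity of $C_N$ and hence that $H(N,\tfrac12)=C_N/N$ strictly increases. For the limit I would simply invoke Theorem~\ref{theo:1} at $p=\tfrac12$: since $S(\tfrac12)=1$, the increasing sequence $H(N,\tfrac12)$ converges to $1$, completing the proof. As an independent cross-check, the crude bound $H(X_N)\le\log_2(N+1)$ gives $H(N,\tfrac12)\ge 1-\log_2(N+1)/N\to1$ even without Theorem~\ref{theo:1}.
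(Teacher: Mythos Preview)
Your proof is correct, and it takes a genuinely different route from the paper's own argument. The paper establishes monotonicity purely combinatorially: using Pascal's identity it rewrites $H(N+1,\tfrac12)$ in terms of $H(N,\tfrac12)$ and reduces the question to the explicit inequality $\bigl(\tfrac{2N+2}{N+2}\bigr)^{N}\ge \binom{N}{\lfloor N/2\rfloor}$, which is then proved by a somewhat laborious induction on $N$ with separate even/odd cases. Your approach instead recognises $N\,H(N,\tfrac12)=H(S_N\mid X_N)=N-H(X_N)$, derives the increment $\Delta_N=H(b_{N+1}\mid X_{N+1})=\mathbb{E}\,h(\bar S_{N+1})$ via the chain rule, and proves $\Delta_N$ is strictly increasing by the leave-one-out identity $\bar S_{n+1}=\tfrac{1}{n+1}\sum_j \bar S_n^{(j)}$ together with Jensen for the strictly concave $h$; strict convexity of $C_N$ through the origin then forces $C_N/N$ to increase. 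This is cleaner and more conceptual: it exposes why the result holds (concavity of $h$ plus a martingale-type averaging), and the same argument would yield monotonicity of $\mathbb{E}\,g(\bar S_n)$ for any concave $g$. The paper's argument, by contrast, is entirely self-contained and requires no probabilistic machinery, at the cost of the ad hoc inequality and its case-by-case induction. Both proofs appeal to Theorem~\ref{theo:1} for the limit, though as you note the crude bound $H(X_N)\le\log_2(N+1)$ already suffices.
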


\begin{proof}
With the conclusion of \textbf{Theorem \ref{theo:1}}, we have $H(N,\frac{1}{2})$ converging to 1 as N approaches infinity. It therefore remains to be proven that $H(N,\frac{1}{2})$ is an increasing function.

First, we have
\begin{displaymath}
\begin{aligned}
H(N,\frac{1}{2})&=\frac{1}{(N+1)\cdot 2^{N+1}}\sum\limits_{k=1}^{N}(N+1)_k\ log_2(N+1)_k\\
&=\frac{1}{(N+1)\cdot 2^{N+1}}[\sum\limits_{k=1}^{N}N_k\ log_2(N_k\frac{N+1}{N+1-k})\\
&\quad +\sum\limits_{k=1}^{N}N_{k-1}\ log_2(N_{k-1}\frac{N+1}{k})]\\
&=\frac{1}{(N+1)\cdot 2^{N+1}}[\sum\limits_{k=0}^{N}N_k\ log_2(N_k\frac{N+1}{N+1-k})\\
&\quad +\sum\limits_{k=0}^{N}N_k\ log_2(N_k\frac{N+1}{k+1})]\\
&=\frac{N}{N+1}H(N,\frac{1}{2})+\frac{1}{(N+1)2^{N+1}}\\
&\quad \cdot [\sum\limits_{k=0}^{N}N_k\ log_2\frac{(N+1)^2}{(N+1-k)(k+1)}]\ .
\end{aligned}
\end{displaymath}
Thus, $H(N+1,\frac{1}{2})\geq H(N,\frac{1}{2})$ is equivalent to
\begin{displaymath}
\begin{aligned}
&\qquad\quad\frac{1}{2^{N+1}}[\sum\limits_{k=0}^{N}N_k\ log_2\frac{(N+1)^2}{(N+1-k)(k+1)}]\\
&\qquad\ \geq H(N,\frac{1}{2})=\frac{1}{N\cdot 2^N}\sum\limits_{k=0}^{N}N_k\ log_2N_k\\
&\Longleftrightarrow\quad N[\sum\limits_{k=0}^{N}N_k\ log_2\frac{(N+1)^2}{(N+1-k)(k+1)}]\\
&\qquad\ \geq 2\sum\limits_{k=0}^{N}N_k\ log_2N_k\\
&\Longleftrightarrow\quad \sum\limits_{k=0}^{N}N_k\ log_2\frac{(N+1)^{2N}}{[(N+1-k)(k+1)]^N(N_k)^2}\geq 0\ .
\end{aligned}
\end{displaymath}
As $(N+1-k)(k+1)\leq(\frac{N}{2}+1)^2$ and $N_k\leq N_{\lfloor \frac{N}{2}\rfloor}$, it is sufficient to prove
\begin{equation}
(\frac{2N+2}{N+2})^N\geq N_{\lfloor \frac{N}{2}\rfloor}\ .
\label{inequ:5}
\end{equation}

Next, we prove inequality (\ref{inequ:5}) by induction. When $N=1$, $(\frac{2N+2}{N+2})^N=\frac{4}{3}\geq 1=N_{\lfloor \frac{N}{2}\rfloor}$;
When $N=2$, $(\frac{2N+2}{N+2})^N=\frac{9}{4}\geq 2=N_{\lfloor \frac{N}{2}\rfloor}$. Assuming that when $N=k-1$, $k\geq 2$, we have $(\frac{2k}{k+1})^(k-1)\geq (k-1)_{\lfloor \frac{k-1}{2}\rfloor}$. Then, when $N=k+1$, we have $(k+1)_{\lfloor\frac{k+1}{2}\rfloor}=\frac{k(k+1)}{\lfloor\frac{k+1}{2}\rfloor (k+1-\lfloor\frac{k+1}{2}\rfloor)}\cdot (k-1)_{\lfloor \frac{k-1}{2}\rfloor}$. By the induction hypothesis,$(k+1)_{\lfloor\frac{k+1}{2}\rfloor}\leq \frac{k(k+1)}{\lfloor\frac{k+1}{2}\rfloor (k+1-\lfloor\frac{k+1}{2}\rfloor)}(\frac{2k}{k+1})^(k-1)$, it remains to show
\begin{equation}
\frac{k(k+1)}{\lfloor\frac{k+1}{2}\rfloor (k+1-\lfloor\frac{k+1}{2}\rfloor)}(\frac{2k}{k+1})^(k-1)\leq (\frac{2k+4}{k+3})^{k+1}.
\label{inequ:6}
\end{equation}
We prove the inequality (\ref{inequ:6}) by two cases.\\
\textbf{Case 1:} $k$ is even, then $\lfloor\frac{k+1}{2}\rfloor=\frac{k}{2}$, and inequality (\ref{inequ:6}) is equivalent to
\begin{equation}
\begin{aligned}
&\qquad\ \frac{k(k+1)}{\frac{k}{2}(k+1-\frac{k}{2})}(\frac{2k}{k+1})^{k-1}\leq (\frac{2k+4}{k+3})^{k+1}\\
&\Longleftrightarrow \frac{k+1}{k+2}(\frac{k}{k+1})^{k-1}\leq (\frac{k+2}{k+3})^{k+1}\\
&\Longleftrightarrow (\frac{k(k+3)}{(k+1)(k+2)})^{k-1}\leq \frac{(k+2)^3}{(k+1)(k+3)^2}\\
&\Longleftrightarrow (\frac{k^2+k3}{k^2+3k+2})^{k-1}\leq \frac{(k+2)^3}{(k+1)(k+3)^2}.
\end{aligned}
\label{inequ:7}
\end{equation}
Since
\begin{displaymath}
\begin{aligned}
&\ (\frac{k^2+k3}{k^2+3k+2})^{k-1}\\
&\leq \frac{k^2+3k}{k^2+3k+2}\cdot \frac{k^2+3k+1}{k^2+3k+3}\cdots \frac{k^2+4k-2}{k^2+4k}\\
&=\frac{(k^2+3k)(k^+3k+1)}{(k^2+4k-1)(k^2+4k)}\\
&=\frac{(k+3)(k^+3k+1)}{(k+4)(k^2+4k-1)},
\end{aligned}
\end{displaymath}
it remains to show
\begin{displaymath}
\begin{aligned}
&\qquad\quad\frac{(k+3)(k^+3k+1)}{(k+4)(k^2+4k-1)}\leq \frac{(k+2)^3}{(k+1)(k+3)^2}\\
&\Longleftrightarrow\ (k+3)^3(k+1)(k^2+3k+1)\\
&\qquad\leq (k+2)^3(k+4)(k^2+4k-1)\\
&\Longleftrightarrow\ (k^4+10k^3+36k^2+54k+27)(k^2+3k+1)\\
&\qquad\leq (k^4+10k^3+36k^2+56k+32)(k^2+4k-1).
\end{aligned}
\end{displaymath}
Since $k\geq 2$, we have $k^2+3k+1\leq k^2+4k-1$, so the inequality above is hold.\\
\textbf{Case 2:} $k$ is odd, then $\lfloor\frac{k+1}{2}\rfloor=\frac{k+1}{2}$, and inequality (\ref{inequ:6}) is equivalent to
\begin{displaymath}
\begin{aligned}
&\qquad\quad\frac{k(k+1)}{(\frac{k+1}{2})^2}(\frac{2k}{k+1})^{k-1}\leq (\frac{2k+4}{k+3})^{k+1}\\
&\Longleftrightarrow\ (\frac{k}{k+1})^k\leq (\frac{k+2}{k+3})^{k+1}.
\end{aligned}
\end{displaymath}
Since $\frac{k}{k+1}\leq\frac{k+1}{k+2}$, it is sufficient to prove $\frac{k+1}{k+2}(\frac{k}{k+1})^{k-1}\leq (\frac{k+2}{k+3})^{k+1}$. Notice that the inequality above is the same as inequality (\ref{inequ:7}), so the next proof is the same in \textbf{Case 1}.

This completes the proof of \textbf{Theorem \ref{theo:2}}.

\end{proof}






%

\end{document}